\newcommand{\maps}{\colon}
\newcommand{\R}{\mathbb{R}}
\newcommand{\define}[1]{\emph{#1}}
\newcommand{\FF}{\mathcal{F}}
\DeclareMathOperator{\face}{\trianglelefteqslant}
\DeclareMathOperator{\image}{im}
\newtheorem{theorem}{Theorem}
\newtheorem{lemma}{Lemma}
\theoremstyle{definition}
\newtheorem{assumption}{Assumption}
\newtheorem{definition}{Definition}
\newtheorem{example}{Example}
\newtheorem{remark}{Remark}
\newtheorem{problem}{Problem}
\title{\LARGE \bf  Heterogeneous Multi-Agent Multi-Target Tracking using Cellular Sheaves}
\author{Tyler Hanks$^{\ast}$, Cristian F. Nino$^{\ast}$, Joana Bou Barcelo, Austin Copeland, Warren Dixon, James Fairbanks
\thanks{$^\ast$These authors contributed equally.}
\thanks{Tyler Hanks is with the Department of Computer and Information Science and Engineering, and Cristian F. Nino, Joana Bou Barcelo, Austin Copeland, Warren Dixon, and James Fairbanks are with the Department of Mechanical and Aerospace Engineering, University of Florida, emails:
        {\tt\small \{t.hanks, cristian1928, jboubarcelo, acopeland1, wdixon, fairbanksj\}@ufl.edu}.} 
\thanks{Hanks was supported by the NSF Graduate Research Fellowship (DGE-1842473). Hanks, Bou Barcelo, and Fairbanks were supported by DARPA (HR00112220038); Hanks and Fairbanks also by ONR (N00014-23-1-2339). Nino, Copeland, and Dixon were supported by AFRL (FA8651-24-1-0018), AFOSR (FA9550-19-1-0169), and ARL (W911NF-25-2-0045). Any opinions, findings, and conclusions or recommendations are those of the authors and do not necessarily reflect the views of the NSF or other sponsors.}
}
\begin{document}
\maketitle
\thispagestyle{empty}
\pagestyle{empty}

\begin{abstract}
    Multi-agent target tracking in the presence of nonlinear dynamics and agent heterogeneity, where state-space dimensions may differ, is a challenging problem that traditional graph Laplacian methods cannot easily address. This work leverages the framework of cellular sheaves, a mathematical generalization of graph theory, to natively model such heterogeneous systems. While existing coordination sheaf frameworks focus on cooperative problems like consensus, this work extends them to the non-cooperative target-tracking problem. The tracking of multiple, unknown targets is formulated as a harmonic extension problem on a cellular sheaf, accommodating nonlinear dynamics and external disturbances for all agents. A decentralized control law is developed using the sheaf Laplacian, and a corresponding Lyapunov-based stability analysis is provided to guarantee tracking error convergence, with results validated by simulation.
\end{abstract}

\section{Introduction}

The study of multi-agent systems (MAS) has expanded significantly, where the central goal is to design local interaction rules that produce a desired collective behavior. Foundational research established the principles of distributed coordination through canonical problems such as consensus, where agents agree on a common value using only local neighbor information (e.g., \cite{Olfati-Saber.R2004}, \cite{Olfati-Saber.Reza2007}), as well as formation control (e.g., \cite{Mesbahi.Mehran2010}, \cite{Ren.Wei2013}). Many of these tasks are formulated within a leader-follower architecture, where followers must coordinate to track the leaders' states (e.g., \cite{Fax.J2004}, \cite{Hong.Yiguang2006}), providing a natural abstraction for numerous real-world coordination problems.

This paradigm extends directly to the multi-agent target-tracking problem, a comprehensive abstraction for surveillance and monitoring where the target is treated as a dynamic, non-cooperative leader (e.g., \cite{Nino.Cristian2025}). Developing effective solutions, however, requires moving beyond the ideal conditions, such as linear dynamics and complete information, assumed in much of the initial literature. Real-world systems have nonlinear dynamics and must be robust to unmodeled dynamics and external disturbances. Furthermore, information architecture is a critical constraint. Centralized control, which relies on a single processor with global information, is not scalable and has a single point of failure. These characteristics motivate a shift to distributed control, where agents rely on local inter-agent communication, or fully decentralized control, where agents must operate using only their own onboard sensor measurements.

The case-by-case design common in single-target tracking becomes difficult to scale when addressing more complex operational scenarios, such as multi-target assignments or heterogeneous objectives. A primary challenge in modern systems is the integration of heterogeneous agents, which may possess fundamentally distinct dynamical models, computational capabilities, or operational time scales. This heterogeneity can be extreme, with agents operating in different domains (e.g., air, ground, and maritime environments) and thus existing in different state-space dimensions. Developing robust and scalable control frameworks that can manage multiple objectives across such a diverse network remains a significant open problem.

These challenges have motivated the application of cellular sheaves and sheaf Laplacians as an organizing and unifying framework for multi-agent control. Cellular sheaves and sheaf Laplacians are a generalization of graphs and graph Laplacians, encompassing constructions such as scalar and matrix weighted graphs, signed graphs, and connection graphs (e.g, \cite{Hansen.Jakob2020a}). A key benefit of the cellular sheaf approach is in enabling heterogeneity in agent state spaces as well as inter-agent communication and coordination goals. This has been leveraged in diverse applications ranging from opinion dynamics (e.g., \cite{Hansen.Jakob2021}), gossip (e.g., \cite{Riess.Hans2022}), economics (e.g., \cite{Riess.Hans2023}), distributed optimization (e.g., \cite{Hansen.Jakob2019b}), and machine learning (e.g., \cite{Hansen.Jakob2020b, Bodnar.Cristian2022}). In the controls context, these ideas were applied to develop a general framework for the analysis and implementation of multi-agent coordination. The resulting coordination sheaves have the ability to incorporate heterogeneous agent state spaces as well as heterogeneous combinations of common coordination tasks such as consensus, flocking, and formation (e.g., \cite{Hanks.Tyler2025,Zhao.Yichen2025}).

\subsection*{Contributions}
The work on coordination sheaves thus far only considers the fully cooperative setting, where all agents work together to achieve a common goal. This work extends the coordination sheaf framework to encompass the problem of cooperatively tracking a set of unknown targets. This problem is formulated as a time-varying harmonic extension problem on a cellular sheaf which describes the inter-agent and agent-to-target communication and sensing models, and a decentralized controller is developed for solving this problem. The stability of the resulting closed-loop system is dependent on the \emph{relative cohomology} of the cellular sheaf on which the problem is defined. Cohomology is a powerful algebraic-topological invariant of cellular sheaves which is readily computable using linear algebra (e.g., \cite{Hansen.Jakob2019a}). Working within the cellular sheaf framework brings the same benefits of heterogeneity realized in the cooperative setting to the non-cooperative tracking setting. 

The core contributions are summarized as follows:
\begin{itemize}
    \item A formulation of the multi-agent multi-target tracking problem as a harmonic extension problem on a cellular sheaf,
    \item A generalization of this problem to networks of heterogeneous agents and targets, where state-space dimensions may differ,
    \item A decentralized controller for solving this problem, with stability conditioned on a relative cohomology calculation.
\end{itemize}

These contributions are validated with theoretical analysis and simulations.

\section{Preliminaries}
Let $N\in\mathbb{Z}_{>0}$ and $V\triangleq[N]=\{1,\ldots,N\}$. Consider a static, undirected, simple graph $G\triangleq(V,E)$, where $E\subseteq V\times V$ is irreflexive and symmetric. Write $i\leftrightarrow j \equiv ij$ to mean $(i,j)\in E$. For each $i\in V$, define the neighborhood $N_{i}\triangleq\{\,j\in V: i\leftrightarrow j\}$. Given a subset \(U\subseteq V\), the induced subgraph \(G|_{U}\triangleq(U,E|_{U})\) has edge set \(E|_{U}\triangleq\{(i,j)\in U\times U\mid i\leftrightarrow j\}\). Thus, for \(i,j\in U\) with \(i\neq j\), the adjacency relation in \(G|_{U}\) is identical to that in \(G\).

To describe heterogeneous local state spaces and interaction rules, the network is equipped with a cellular sheaf (e.g., \cite{Hansen.Jakob2020a,Hansen.Jakob2021}).

\begin{definition} \label{def:cellular-sheaf}
Given a graph $G = (V,E)$, a \define{(weighted) cellular sheaf} $\mathcal{F}$ assigns:
\begin{enumerate}
    \item[(a)] To each vertex $i\in V$, an inner product space $\mathcal{F}(i)$ called the \define{vertex stalk},
    \item[(b)] To each edge $ij\in E$, an inner product space $\mathcal{F}(ij)$ called the \define{edge stalk}, 
    \item[(c)] For each $i\in V$ and $j\in N_i$, a linear \define{restriction map} $\mathcal{F}_{i\face ij}\maps\mathcal{F}(i) \rightarrow \mathcal{F}(ij)$, where $i\face ij$ denotes that edge $ij$ is incident to vertex $i$.
\end{enumerate}
\end{definition}

Given a subgraph $G'=(V',E') \subseteq G$, the vertex and edge stalks over $G'$ assemble into spaces of \define{0-cochains} and \define{1-cochains} over $G'$ respectively:
\[
C^0(G';\FF)\triangleq \bigoplus_{i\in V'}\FF(i),\quad\quad C^1(G';\FF)\triangleq\bigoplus_{ij\in E'}\FF(ij),
\]
where $\bigoplus_{i\in I}H_i$ denotes the direct sum of inner product spaces $H_i$ over index set $I$.
The inner products on stalks extend via orthogonal direct sum to inner products on cochain spaces.

\begin{example}[Constant Sheaf]
    For a fixed $k\in\mathbb{R}_{>0}$ and $G=(V,E)$, the constant sheaf $\underline{\mathbb{R}}^k$ assigns $\R^k$ as every vertex and edge stalk, with identity restriction maps $\underline{\R}^k_{i\face ij} = I_{k}$ for all $i\in V, j\in N_i$, where $I_{k}$ denotes the $k\times k$ identity matrix.
\end{example}

The restriction maps of a cellular sheaf specify local linear constraints which propagate to a global system of linear equations over the whole graph on which the sheaf is defined. The local and global solutions of these systems are captured by notions of local and global sections.
\begin{definition}[Local and Global Sections]
    Suppose $\mathcal{F}$ is a cellular sheaf over $G$ with $G'=(V',E')$ a subgraph of $G$. 
    A \define{local section} of $\FF$ over $G'$ is a 0-cochain $\mathbf{x}\in C^0(G';\FF)$ such that $\FF_{i\face ij}x_i = \FF_{j\face ij}x_j$ for all $ij\in E'$.
    The space of local sections of $\FF$ over $G'$ is denoted $\Gamma(G';\FF)\subseteq C^0(G';\FF)$. A \define{global section} of $\FF$ corresponds to the case when $G'=G$.
\end{definition}

\begin{example}
    On the constant sheaf $\underline{\R}^k$ over a connected graph $G$, global sections correspond to consensus vectors, i.e., the same vector assigned to each vertex.
\end{example}

To compute global sections, tools from sheaf cohomology are used. Cohomology transforms questions of topology into questions of linear algebra and is thus readily computable.
\begin{definition}
    Given a cellular sheaf $\FF$ over a graph $G$, the \define{coboundary operator} is a linear map
$C^0(G; \mathcal{F}) \xrightarrow{\delta_{\mathcal{F}}} C^1(G; \mathcal{F})$ given by $(\delta_{\mathcal{F}} \mathbf{x})_{ij} = \mathcal{F}_{i \face ij}(x_i) - \mathcal{F}_{j \face ij}(x_j)$. The \define{sheaf Laplacian} of $\FF$ is a linear map $L_\FF\maps C^0(G;\FF)\to C^0(G;\FF)$ defined by $L_\FF\triangleq \delta_\FF^\top\delta_\FF$.
\end{definition}
By construction, the sheaf Laplacian is a symmetric positive semi-definite matrix. It has a block structure with diagonal blocks $L_{i,i}=\sum_{j\in N_i}\FF^\top_{i\face ij}\FF_{i\face ij}$ and off-diagonal blocks given by $L_{i,j} = -\FF^\top_{i\face ij}\FF_{j\face ij}$ whenever $ij\in E$ and $0$ otherwise. The sheaf Laplacian acts vertexwise as
\[
(L_\mathcal{F} \mathbf{x})_i = \underset{j\in N_i}{\sum} \mathcal{F}_{i \face ij}^\top  \bigl( \mathcal{F}_{i \face ij}(x_i) - \mathcal{F}_{j \face ij}(x_j) \bigr).
\]
The \define{degree-0} and \define{degree-1 cohomology} of a cellular sheaf is given by 
\[
H^0(G; \mathcal{F}) \triangleq \ker \delta_{\mathcal{F}}, \quad H^1(G; \mathcal{F}) \triangleq C^1(G; \mathcal{F})/ \image \delta_{\mathcal{F}}.
\]
It follows that $H^0(G; \mathcal{F})= \Gamma(G; \mathcal{F})$ because $x \in \ker \delta_{\mathcal{F}}$ when $\mathcal{F}_{i \face ij}(x_i) - \mathcal{F}_{j \face ij}(x_j) = 0$ for all $ij \in E$. This equivalency shows that the set of global sections has the structure of a vector subspace of $C^0(G;\FF)$. Moreover, $\ker \delta_\FF = \ker L_\FF$.

The notion of relative cohomology will also be used. Let $G'\subseteq G$ be a subgraph and define
\[
C^0(G,G';\FF)\triangleq \bigoplus_{i\in V(G)\setminus V(G')}\FF(i).
\]
The \define{degree-0 relative cohomology} of $G$ with respect to $G'$ is defined as
\[
H^0(G,G';\FF)\triangleq \ker \delta|_{C^0(G,G';\FF)}.
\]
The interpretation of an element $\mathbf{x}\in H^0(G,G';\FF)$ is as a global section of $\FF$ that vanishes on $G'$ \cite{Hansen.Jakob2021}.

\begin{example}
    Consider again the constant sheaf $\underline{\R}^k$ on a connected graph $G$. The sheaf Laplacian $L_{\underline{\R}^k}$ is equal to $L_G\otimes I_k$, where $L_G$ is the graph Laplacian of $G$. Given any subgraph $G'\subseteq G$, the relative cohomology of the constant sheaf is
    $H^0(G,G';\underline{\R}^k) = 0$
    because the only constant function on $G$ which vanishes on $G'$ is the 0 function.
\end{example}

\section{Problem Formulation}

\subsection{System Dynamics}
Let $N,M\in\mathbb{Z}_{>0}$. Consider a multi-agent system with controllable agents $V_q$ of size $|V_q|=N$ and target agents $V_p$ of size $|V_p|=M$. Denote the set of all $N+M$ agents as $V\triangleq V_q\sqcup V_p$, where $A\sqcup B$ denotes the disjoint union of sets $A$ and $B$.
For each controllable agent $i\in V_q$, the dynamics are
\begin{equation}
\dot{q}_{i}(t)=f_{i}\left(q_{i}(t),t\right)+g_{i}\left(q_{i}(t),t\right)u_{i}\left(t\right)+\omega_{i}\left(t\right),\label{eq:Agent Dynamics}
\end{equation}
where $t_0\in\mathbb{R}_{\geq0}$ denotes the initial time, $t\geq t_0$ denotes the current time, $q_i:\mathbb{R}_{\geq t_0}\to\mathbb{R}^{n_i}$ denotes the agents' unknown generalized position, $\dot{q}_i:\mathbb{R}_{\geq t_0}\to\mathbb{R}^{n_i}$ denotes the agents' unknown generalized velocity, the unknown functions
$f_{i}:\mathbb{R}^{n_i}\times\mathbb{R}_{\geq t_0}\to\mathbb{R}^{n_i}$ and
$\omega_{i}:\mathbb{R}_{\geq t_0}\to\mathbb{R}^{n_i}$ represent drift dynamics and exogenous disturbances, respectively, $g_{i}:\mathbb{R}^{n_i}\times\mathbb{R}_{\geq t_0}\to\mathbb{R}^{n_i\times s_{i}}$ denotes a known control effectiveness matrix, $u_{i}:\mathbb{R}_{\geq t_0}\to\mathbb{R}^{s_{i}}$ denotes the control input, and $s_{i}\in\mathbb{Z}_{>0}$ denotes the number of control channels.

For each target agent $k\in V_p$, the dynamics are
\begin{equation}
\dot{p}_k(t)=f_k\left(p_{k}(t),t\right)+\omega_{k}\left(t\right),\label{eq:Target Dynamics}
\end{equation}
where $p_k:\mathbb{R}_{\geq t_0}\to\mathbb{R}^{n_k}$ denotes the targets' unknown generalized position, $\dot{p}_k:\mathbb{R}_{\geq t_0}\to\mathbb{R}^{n_k}$ denotes the targets' unknown generalized velocity, and the unknown functions
$f_{k}:\mathbb{R}^{n_k}\times\mathbb{R}_{\geq t_0}\to\mathbb{R}^{n_k}$ and
$\omega_{k}:\mathbb{R}_{\geq t_0}\to\mathbb{R}^{n_k}$ represent modeling
uncertainties and exogenous disturbances, respectively. 
For notational convenience when ranging over controlled and target agents, set for all $i\in V$
\[
x_i\triangleq \begin{cases}
    q_i, & \text{if }i\in V_q,\\
    p_i, & \text{if }i\in V_p.
\end{cases}
\]

\begin{remark}
    Allowing agent-specific state dimensions ($x_i\in \mathbb{R}^{n_i}$) enables heterogeneous teams. For instance, a group of unmanned aerial vehicles (UAVs) evolving in $\mathbb{R}^3$ to track an unmanned surface vehicle (USV) in $\R^2$, and conversely. 
\end{remark}

Regularity and boundedness conditions used for well-posedness in the subsequent analysis are stated next.
\begin{assumption}
\label{functionbound}The functions $f_{i}$ are of class $\mathcal{C}^{1}$
for all $i\in V$.
\end{assumption}

\begin{assumption}
\label{control effectiveness}The functions $g_{i}$ are full row rank for all $(q_i,t)\in\mathbb{R}^{n_{i}}\times\mathbb{R}_{\geq{t_0}}$ and are of class $\mathcal{C}^{0}$. Furthermore, for each fixed $q_{i}$, the map $t\mapsto g_{i}(q_{i},t)$ is uniformly bounded in $t\in \mathbb{R}_{\geq{t_0}}$ for all $i\in V_q$.
\end{assumption}

\begin{assumption}
\label{disturbance bound}The functions $\omega_{i}$ are of class $\mathcal{C}^{0}$, and there exist known constants $\overline{\omega}_{i}\in\mathbb{R}_{>0}$ such that $\| \omega_{i}\left(t\right)\| \leq\overline{\omega}_{i}$ for all $t\in\mathbb{R}_{\geq t_{0}}$ and for all $i\in V$.
\end{assumption}

By \cite{Penrose.R1955}, Assumption \ref{control effectiveness} ensures that the pseudoinverses $g_{i}^{+}:\mathbb{R}^{n_i}\times\mathbb{R}_{\geq t_{0}}\to\mathbb{R}^{s_{i}\times n_i}$ act as right inverses for all $i\in V_q$ and $t\in\mathbb{R}_{\geq t_{0}}$. In addition, the following condition is imposed.
\begin{assumption}
\label{gInvBound}The functions $g_{i}^{+}$ are of class $\mathcal{C}^{0}$, and for each fixed $q_{i}$ the map $t\mapsto g_{i}^{+}(q_{i},t)$ is uniformly bounded in $t\in \mathbb{R}_{\geq{t_0}}$ for all $i\in V_q$.
\end{assumption}

\subsection{Communication Topology}

Let $G_q=(V_q,E_q)$ be a graph whose vertices are controllable agents and whose edges denote bidirectional communication. Similarly, let $G_p=(V_p, E_p)$ be the bidirectional communication graph among the target agents. Define the total graph $G=(V,E)$ by taking the disjoint union $G_q\sqcup G_p$ and then adding edges between $i\in V_q$ and $k\in V_p$ when agent $i$ can sense target $k$. The following assumption on $G$ ensures that each target is sensed by at least one agent and that the agent team can, collectively, track all targets.

\begin{assumption}[Communication topology]
\label{Communication Topology}The agent graph $G_q$ is connected, and for each target $k\in V_p$ there exists at least one node $i\in V_q$ with $ik\in E$.
\end{assumption}

A cellular sheaf specifies inter-agent communication models. Let $\FF$ be a cellular sheaf over $G$ with vertex stalks equal to agent state spaces, $\FF(i) = \R^{n_i}$ for $i\in V$.
Define the subsheaf of controllable agents by $\mathcal{Q}~\triangleq~\FF|_{G_q}$.

The sensing model between agents $i\in V$ and $j\in N_i$ is specified by ``parallel transport" on $\FF$. Explicitly, agents $i$ and $j$ can measure
\[\begin{aligned}
d_{i,j}&\triangleq\FF_{i\face ij}^\top(\FF_{i\face ij}x_i - \FF_{j\face ij}x_j),\\ d_{j,i}&\triangleq\FF_{j\face ij}^\top(\FF_{j\face ij}x_j - \FF_{i\face ij}x_i),
\end{aligned}
\]
which, in the constant-sheaf case, reduce to relative-position measurements. The next examples illustrate heterogeneous dimensions and local frames.

\begin{example}[USV-UAV projection]\label{ex:projections}
    Suppose $i\in V$ is a USV with state in $\R^2$ and agent $j\in N_i$ is a UAV with state in $\R^3$, with restriction maps
    \[
    \FF_{i\face ij} = \begin{bmatrix}
        1 & 0 \\ 0 & 1
    \end{bmatrix}\quad\quad\quad\FF_{j\face ij} = \begin{bmatrix}
        1 & 0 & 0\\ 0 & 1 & 0
    \end{bmatrix}.
    \]
    Thus, $i$ senses the difference between its planar position and the projection of $j$'s position onto the $(x,y)$-plane; conversely, $j$ senses the difference between its own position and $i$'s position embedded as $(x_i , y_i, 0)$.
\end{example}

For an edge $ij\in E$, the restriction maps $\FF_{i\face ij}\in\R^{m_{ij}\times n_i}$ and $\FF_{j\face ij}\in\R^{m_{ij}\times n_j}$ transport agent states into a common local measurement frame (stalk) $\FF(ij)=\R^{m_{ij}}$. The transported relative state
\[
y_{ij}\;\triangleq\;\FF_{j\face ij}x_j-\FF_{i\face ij}x_i\ \in\ \FF(ij)
\]
is measured via its range $r_{ij}\;\triangleq\;\|y_{ij}\|$ and bearing $b_{ij}\;\triangleq\;y_{ij}/\|y_{ij}\| \in\mathbb{S}^{m_{ij}-1}$, such that $y_{ij}=r_{ij}\,b_{ij}$. The corresponding node-wise pullbacks
\[
d_{i,j}=-\,\FF_{i\face ij}^\top y_{ij},
\qquad
d_{j,i}=\,\FF_{j\face ij}^\top y_{ij},
\]
realize decentralized residuals consistent with this range–bearing sensing.

\begin{example}[Aligned local frame]
Assume $n_i=n_j=n$ and $\FF(ij)=\R^n$. Let $R_i\in\mathrm{SO}(n)$ map global coordinates to agent $i$’s local frame. Choose the edge frame to coincide with $i$’s frame and set
\[
\FF_{i\face ij}=R_i,\qquad \FF_{j\face ij}=R_i.
\]
Then
\begin{small}
\[
y_{ij}=R_i(x_j-x_i),
\quad
r_{ij}=\|x_j-x_i\|,
\quad
b_{ij}=\frac{R_i(x_j-x_i)}{\|x_j-x_i\|},
\]
\end{small}so $r_{ij}$ is the Euclidean distance and $b_{ij}$ is the unit bearing from $i$ to $j$ expressed in $i$’s local coordinates.
\end{example}

\begin{example}[Planar sensing of 3D target]
Let $n_i=2$, $n_j=3$, and $\FF(ij)=\R^2$. Let $R_i\in\mathrm{SO}(2)$ map global coordinates to agent $i$’s local frame, and let $P\in\R^{2\times 3}$ project $\R^3$ onto the $(x,y)$-coordinates. Define
\[
\FF_{i\face ij}=R_i,\qquad \FF_{j\face ij}=R_i\,P.
\]
Then
\begin{small}
\[
y_{ij}=R_i\big(P\,x_j - x_i\big),\quad
r_{ij}=\|y_{ij}\|,\quad
b_{ij}=\frac{y_{ij}}{\|y_{ij}\|}\in\mathbb{S}^1,
\]
\end{small}so agent $i$ measures the planar range and bearing to the projection of $j$’s position onto the $(x,y)$-plane, expressed in $i$’s frame.
\end{example}

\begin{remark}
    (i) For the constant sheaf with identity restrictions, $y_{ij}=x_j-x_i$, recovering relative-position sensing. (ii) The choice of edge frame is absorbed into the restriction maps; any orthonormal reassignment is equivalent up to a left action on $\FF(ij)$. (iii) In decentralized implementations, each agent evaluates $\FF_{i\face ij}$ using only its local pose estimate; range–bearing sensors provide $(r_{ij},b_{ij})$ from which $y_{ij}$ is reconstructed without global coordinates.
\end{remark}

\subsection{Control Objective}

For notational convenience, define the ensemble controllable and target state vectors
\[\begin{aligned}
    \mathbf{q} \triangleq& \begin{bmatrix}
        q_1^\top & q_2^\top & \dots & q_N^\top
    \end{bmatrix}^\top \in C^0(G_q,\FF)\\
    \mathbf{p}\triangleq& \begin{bmatrix}
        p_1^\top & p_2^\top &\dots & p_M^\top
    \end{bmatrix}^\top\in C^0(G_p,\FF).
\end{aligned}
\]

The objective is to design a decentralized controller for each agent
$i\in V_q$ that tracks the targets using only the relative measurement model. Equivalently, require the residuals to vanish, $d_{i,j} = d_{j,i} = 0$ for all $ij\in E$, which holds precisely when
\[\mathbf{x}\triangleq\begin{bmatrix}
    \mathbf{q}^\top & \mathbf{p}^\top
\end{bmatrix}^\top\in C^0(G;\FF)\]
is a global section of $\FF$. Because the target states are uncontrolled, this constraint is generally infeasible. For example, an agent tasked with simultaneously tracking the positions of two targets not at consensus cannot achieve a 0 residual with respect to both targets. A tractable relaxation is obtained via harmonic extension.
\begin{definition}
    Let $\FF$ be a cellular sheaf on $G=(V,E)$, let $U\subseteq V$, and let $\mathbf{y}\in C^0(G|_U;\FF)$. Define $W \triangleq V\setminus U$. A \define{harmonic extension} of $\mathbf{y}$ to $G$ is a 0-cochain $\mathbf{x}\in C^0(G;\FF)$ satisfying
    \[
    (L_\FF\mathbf{x})|_W = 0,\qquad \mathbf{x}|_U = \mathbf{y}.
    \]
\end{definition}
Harmonic extensions exist for any local 0-cochain; when $H^0(G,G|_U;\FF)=0$, the harmonic extension of any local 0-cochain is unique (see~\cite[Proposition~4.1]{Hansen.Jakob2019a}). 

\begin{example}
    Consider three agents with states in $\R$ where agent 2 is tasked with tracking agents 1 and 3. This setup gives a constant sheaf $\underline{\R}$ on the 3-vertex path graph $P_3$. The sheaf Laplacian equals the graph Laplacian
    \[
    L_G = \begin{bmatrix}
        1 & -1 & 0\\
        -1 & 2 & -1 \\
        0 & -1 & 1
    \end{bmatrix}.
    \]
    Fix endpoint values $x_1 = a$ and $x_3 = c$. The harmonic extension is $\begin{bmatrix}
        a & b & c
    \end{bmatrix}^\top$
    where $b = \frac{a+c}{2}$. Interpreted as target tracking, if agent 2 tracks agents 1 and 3 that do not converge to consensus, the optimal steady state is the midpoint between them.
\end{example}

With the harmonic extension framework in place, the control objective can be stated precisely.

\begin{problem}\label{prob:1}
    Design a decentralized controller for each agent $i\in V_q$ such that the ensemble state $\mathbf{q}(t)$ tracks the harmonic extension of $\mathbf{p}(t)$ for a given trajectory of the target agents.
\end{problem}

\begin{remark}
    This objective is analogous to heat diffusion on a graph with Dirichlet boundary conditions. Consider the discrete heat flow $\dot{x}=-L_\FF x$ on interior vertices with boundary values fixed on $U\subseteq V$. The unique equilibrium satisfies $L_\FF x=0$ on $W=V\setminus U$ with $x|_U=y$, which is exactly the harmonic extension of $y$ to $G$. Equivalently, the harmonic extension is the unique minimizer of the Dirichlet energy $\frac{1}{2} \|\delta_\FF x\|^2$ subject to the boundary constraint $x|_U=y$.
    In the tracking context, the target states $\mathbf{p}(t)$ serve as time-varying Dirichlet data on $G_p$. The desired agent state $\mathbf{q}(t)$ should follow the harmonic extension of $\mathbf{p}(t)$ to the agent vertices, denoted $\mathbf{q}^*(t)$. When the boundary signal varies slowly relative to the closed-loop convergence rate, $\mathbf{q}(t)$ remains close to $\mathbf{q}^\ast(t)$, the quasi-static regime. More generally, the closed-loop behaves as a diffusion process driven by a time-varying boundary, producing a spatial interpolation of the target trajectories over the communication topology consistent with the sheaf constraints. 
\end{remark}

\section{Control Design}

Let $i\in V_q$ be arbitrary. For notational convenience, let $A_i\triangleq N_i\cap V_q$ denote the set of adjacent control agents to $i$ and $T_i\triangleq N_i\cap V_p$ denote the set of adjacent target agents to $i$. To quantify the control objective for agent $i$, define the disagreement $\eta_i\triangleq (L_\FF\mathbf{x})_i\in \R^{n_i}$. Expanding this definition yields
\begin{equation}\label{eq:eta_i}
\begin{aligned}
    \eta_i =& \sum_{j\in N_i}\FF_{i\face ij}^\top(F_{i\face ij}q_i - \FF_{j\face ij}x_j),\\
    =& \sum_{j\in A_i}\FF_{i\face ij}^\top(F_{i\face ij}q_i - \FF_{j\face ij}q_j) \\+& \sum_{k\in T_i}\FF^\top_{i\face ik}\FF_{i\face ik}q_i - \sum_{k\in T_i}\FF^\top_{i\face ik}\FF_{k\face ik}p_k.
\end{aligned}
\end{equation}

Based on the subsequent stability analysis, the control input is designed
as
\begin{equation}\label{eq: controller-1}
u_{i} =-k_{1}g_{i}^{+}\eta_{i},
\end{equation}
where $k_{1}\in\mathbb{R}_{>0}$ is a user-defined constant.


To facilitate analysis, we make the following definitions for the \emph{ensemble representation} of our system. 
Define the agent-target generalized degree map $D\maps C^0(G_q;\FF)\to C^0(G_q;\FF)$ as a block diagonal matrix whose $i$th block is $\sum_{k\in T_i}\FF^\top_{i\face ik}\FF_{i\face ik}q_i$.
Note that the $i$th block is 0 if agent $i$ cannot sense any targets.
Also, define the agent-target interaction map $B\maps C^0(G_p;\FF)\to C^0(G_q;\FF)$ as a block matrix with blocks $B_{i,k}\triangleq \FF_{i\face ik}^\top\FF_{k\face ik}$ if agent $i$ can sense target $k$ and 0 otherwise. Lastly, define the total agent interaction map $\mathcal{H}\maps C^0(G_q;\FF)\to C^0(G_q;\FF)$ as $\mathcal{H}\triangleq L_\mathcal{Q} + D$.
With these defined, it is possible to rewrite \eqref{eq:eta_i} as
\[
\eta_i = (L_\mathcal{Q}\mathbf{q})_i + (D\mathbf{q})_i - \sum_{k\in T_i}B_{i,k}p_k.
\]
Thus, stacking all the $\eta_i$'s yields the ensemble representation of disagreement across the whole network
\begin{equation}\label{eta1}
\eta = (L_\mathcal{Q} + D)\mathbf{q} - B\mathbf{p} = \mathcal{H}\mathbf{q} - B\mathbf{p}.
\end{equation}
The next lemma confirms that regulating $\eta$ to 0 is equivalent to solving for a harmonic extension of the target state $\mathbf{p}$.
\begin{lemma}\label{lem:eta0_harmonic}
    If $\eta=0$, then $\begin{bmatrix}
        \mathbf{q}^\top & \mathbf{p}^\top
    \end{bmatrix}^\top$ is a harmonic extension of $\mathbf{p}$ to $G$.
\end{lemma}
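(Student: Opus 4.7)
The plan is to verify the two defining conditions of a harmonic extension directly from the definitions. Take $U = V_p$ (boundary vertices) and $W = V_q$ (interior vertices), with the boundary data $\mathbf{y} = \mathbf{p}$. We must check that $\mathbf{x} = [\mathbf{q}^\top\ \mathbf{p}^\top]^\top$ satisfies both (i) $\mathbf{x}|_{V_p} = \mathbf{p}$ and (ii) $(L_\FF \mathbf{x})|_{V_q} = 0$.

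Condition (i) is immediate from the construction of $\mathbf{x}$ by block-stacking $\mathbf{q}$ and $\mathbf{p}$. For condition (ii), the key observation is that the ensemble vector $\eta$ defined in \eqref{eta1} is \emph{by construction} the restriction of $L_\FF \mathbf{x}$ to the controllable block $V_q$. To see this, I would fix an arbitrary $i \in V_q$ and expand $(L_\FF \mathbf{x})_i$ using the vertexwise action of the sheaf Laplacian, namely $(L_\FF \mathbf{x})_i = \sum_{j \in N_i} \FF_{i \face ij}^\top(\FF_{i \face ij} x_i - \FF_{j \face ij} x_j)$. Splitting $N_i = A_i \sqcup T_i$ and using $x_i = q_i$, $x_j = q_j$ for $j \in A_i$, and $x_k = p_k$ for $k \in T_i$ recovers precisely the expression for $\eta_i$ in \eqref{eq:eta_i}. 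Stacking over $i \in V_q$ shows $(L_\FF \mathbf{x})|_{V_q} = \mathcal{H} \mathbf{q} - B\mathbf{p} = \eta$.

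Therefore, assuming $\eta = 0$ gives $(L_\FF \mathbf{x})|_{V_q} = 0$, which, together with the boundary identity in (i), is exactly the definition of a harmonic extension of $\mathbf{p}$ to $G$ with boundary $V_p$.

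There is no real obstacle here; the lemma is essentially a bookkeeping statement verifying that the disagreement variable $\eta$ used in the controller design is the sheaf-Laplacian residual on the interior. The only subtlety to flag is the placement of $U$ and $W$: the targets $V_p$ serve as the \emph{fixed boundary} (their values are pinned because their dynamics are uncontrolled), while the controllable agents $V_q$ form the \emph{interior} on which the harmonic condition must hold. Everything else is a direct unwinding of the definitions of $L_\FF$, $\mathcal{H}$, $D$, and $B$.
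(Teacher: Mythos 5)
Your proof is correct and follows essentially the same route as the paper: both hinge on identifying the $V_q$-block of $L_\FF$ applied to $[\mathbf{q}^\top\ \mathbf{p}^\top]^\top$ with $\mathcal{H}\mathbf{q}-B\mathbf{p}=\eta$, so that $\eta=0$ is exactly the interior (harmonic) condition with $\mathbf{p}$ as boundary data. The only cosmetic difference is that you unwind this vertexwise from the definition of $\eta_i$, whereas the paper states it via the block decomposition $L_\FF[V_q,V_q]=\mathcal{H}$, $L_\FF[V_q,V_p]=-B$ following Hansen's harmonic-extension formulation.
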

\begin{proof}
    Following the discussion in \cite[Section 3.1]{Hansen.Jakob2020a}, the problem of harmonic extension can be written using a block decomposition of $L_\FF$ as
    \[
    \begin{bmatrix}
    L_\FF[V_q,V_q] & L_\FF[V_q, V_p] \\ L_\FF[V_p, V_q] & L_\FF[V_p, V_p]
    \end{bmatrix}
    \begin{bmatrix}
        \mathbf{q}\\\mathbf{p}
    \end{bmatrix} = \begin{bmatrix}
        0\\\mathbf{z}
    \end{bmatrix},
    \]
    where $\mathbf{z}$ is a free variable. Therefore, if
    \[
    L_\FF[V_q,V_q]\mathbf{q} + L_\FF[V_q, V_p]\mathbf{p} = 0,
    \]
    then $\begin{bmatrix}
        \mathbf{q}^\top & \mathbf{p}^\top
    \end{bmatrix}^\top$ is a harmonic extension of $\mathbf{p}$ to $G$. Observing that $L_\FF[V_q,V_q] = \mathcal{H}$ and $L_\mathcal{F}[V_q, V_p]= -B$ completes the proof.
\end{proof}

Lemma \ref{lem:eta0_harmonic} implies that designing a decentralized controller to regulate $\eta$ to 0 solves Problem \ref{prob:1}. This holds when 
\begin{equation}\label{eq:harmonic_equation}
    \mathcal{H}\mathbf{q} = B\mathbf{p}.
\end{equation}
To ensure uniqueness of solutions to \eqref{eq:harmonic_equation}, we require the following assumption.

\begin{assumption}\label{as:relative_cohom}
    The relative cohomology of $G$ with respect to $G_p$ is trivial. Formally, $H^0(G, G_p;\FF) = 0$.
\end{assumption}

Assumption \ref{as:relative_cohom} yields the following result about~\eqref{eq:harmonic_equation}.
\begin{lemma}
    The matrix $\mathcal{H}$ is invertible and positive definite.
\end{lemma}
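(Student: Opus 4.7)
The plan is to prove positive definiteness directly and then conclude invertibility. The key observation, already established in the proof of Lemma \ref{lem:eta0_harmonic}, is that $\mathcal{H}$ is precisely the block $L_\FF[V_q,V_q]$, i.e., a principal submatrix of the full sheaf Laplacian. Since $L_\FF = \delta_\FF^\top\delta_\FF$ is positive semi-definite, every principal submatrix of $L_\FF$ is also positive semi-definite, so $\mathcal{H}\succeq 0$ for free. The real work is showing $\ker\mathcal{H}=\{0\}$, after which symmetry and positive semi-definiteness upgrade immediately to positive definiteness and invertibility.

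To show triviality of the kernel, I would take an arbitrary $\mathbf{q}\in\ker\mathcal{H}\subseteq C^0(G_q;\FF)$ and extend it by zero on the target vertices: define $\mathbf{x}\in C^0(G;\FF)$ by $\mathbf{x}|_{V_q}=\mathbf{q}$ and $\mathbf{x}|_{V_p}=0$. Using the block formula from Lemma \ref{lem:eta0_harmonic}, the $V_q$-part of $L_\FF\mathbf{x}$ is $\mathcal{H}\mathbf{q}-B\cdot 0 = 0$. Then
\[
\|\delta_\FF\mathbf{x}\|^2 = \langle \mathbf{x},\,L_\FF\mathbf{x}\rangle = \sum_{i\in V_q}\langle x_i,(L_\FF\mathbf{x})_i\rangle + \sum_{k\in V_p}\langle 0,(L_\FF\mathbf{x})_k\rangle = \langle\mathbf{q},\mathcal{H}\mathbf{q}\rangle = 0,
\]
so $\delta_\FF\mathbf{x}=0$, i.e., $\mathbf{x}\in H^0(G;\FF)$.

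Finally, I would invoke the definition of relative cohomology. Because $\mathbf{x}$ vanishes on all of $V_p$, it lies in the subspace $C^0(G,G_p;\FF)=\bigoplus_{i\in V_q}\FF(i)$ (via extension by zero), and its membership in $\ker\delta_\FF$ places it in $H^0(G,G_p;\FF)$. By Assumption \ref{as:relative_cohom}, this space is trivial, so $\mathbf{x}=0$, hence $\mathbf{q}=0$. This shows $\ker\mathcal{H}=\{0\}$; combined with $\mathcal{H}\succeq 0$ and symmetry (evident from the definition $\mathcal{H}=L_\mathcal{Q}+D$ since both summands are symmetric), $\mathcal{H}$ is positive definite and therefore invertible.

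The only subtlety is the bookkeeping in step three: identifying the extension-by-zero $\mathbf{x}$ with an element of the relative cochain space and checking that its membership in $\ker\delta_\FF$ genuinely encodes an element of $H^0(G,G_p;\FF)$. This is exactly the interpretation given after the definition of relative cohomology in the preliminaries ("a global section of $\FF$ that vanishes on $G'$"), so it is really just unwinding definitions rather than a technical obstacle.
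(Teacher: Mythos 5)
Your proof is correct, but it takes a different route from the paper's. The paper gets positive semi-definiteness by writing $\mathcal{H}=L_{\mathcal{Q}}+D$ as a sum of positive semi-definite matrices, and then gets invertibility indirectly: it invokes the fact (quoted earlier from \cite[Proposition~4.1]{Hansen.Jakob2019a}) that Assumption~\ref{as:relative_cohom} guarantees existence and uniqueness of harmonic extensions, so that \eqref{eq:harmonic_equation} is uniquely solvable for every $\mathbf{p}$, which forces $\mathcal{H}$ to be invertible. You instead prove $\ker\mathcal{H}=\{0\}$ directly: extend $\mathbf{q}\in\ker\mathcal{H}$ by zero on $V_p$, use the identity $\|\delta_\FF\mathbf{x}\|^2=\langle\mathbf{x},L_\FF\mathbf{x}\rangle$ together with $(L_\FF\mathbf{x})|_{V_q}=\mathcal{H}\mathbf{q}$ to conclude $\delta_\FF\mathbf{x}=0$, and then identify $\mathbf{x}$ as an element of $H^0(G,G_p;\FF)=0$. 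In effect you have inlined a proof of the uniqueness direction of the cited proposition rather than citing it, which makes your argument more self-contained (and it sidesteps the existence half entirely, since triviality of the kernel of a square matrix already gives invertibility); the paper's version is shorter but leans on the external reference. Your alternative justification of $\mathcal{H}\succeq 0$ as a principal submatrix of $L_\FF$ is equally valid, and the final upgrade from symmetric, positive semi-definite, and injective to positive definite is the same in both arguments.
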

\begin{proof}
    $\mathcal{H}$ is positive semi-definite because it is the sum of positive semi-definite matrices. Assumption \ref{as:relative_cohom} ensures that solutions of \eqref{eq:harmonic_equation} exist uniquely for any $\mathbf{p}\in C^0(G_p,\FF)$, implying that $\mathcal{H}$ is invertible. Because $\mathcal{H}$ is positive semi-definite and invertible, it is positive definite.
\end{proof}

\begin{remark}
    Assumption \ref{as:relative_cohom} is crucial, as it eliminates ambiguity in the tracking task and, as will be shown, ensures stability of the closed-loop control system. The trade-off for this guarantee is a restriction on the types of heterogeneity permitted within the framework. For illustration, consider the setup of Example \ref{ex:projections}. With these restriction maps, surface agents could track the projection of aerial targets onto the $xy$-plane. This is because, for any target positions, there is a unique harmonic extension (specifically, the centroid of the projected target states). Conversely, aerial agents could not track ground-based targets because the harmonic extension of the grounded target states to the aerial agent states does not impose any constraints on the agent's height ($z$-coordinate). In this failing scenario, the cohomology of the whole graph relative to the USV subgraph is $\R$ rather than $0$. This non-zero cohomology arises from the one-dimensional agreement subspace (the unconstrained $z$-axis), which violates Assumption~\ref{as:relative_cohom}.
\end{remark}

Based on the subsequent analysis, define the harmonic extension $\mathbf{q}^* \in C^0(G_q,\FF)$ for a given target configuration $\mathbf{p}\in C^0(G_p,\FF)$ as
\begin{equation}\label{qStar}
   \mathbf{q}^* \triangleq \mathcal{H}^{-1}B\mathbf{p}, 
\end{equation}
and the tracking error $\mathbf{e}\in C^0(G_q,\FF)$ as
\begin{equation}\label{trackingError}
  \mathbf{e} \triangleq \mathbf{q}^* - \mathbf{q}.  
\end{equation}
Then, substituting \eqref{qStar} and \eqref{trackingError} into \eqref{eta1} yields
\begin{equation}\small{}
\label{eq:eta_he}
    \eta = \mathcal{H}(\mathbf{q}^*-\mathbf{e}) - B\mathbf{p}  = \mathcal{H}(\mathcal{H}^{-1}B\mathbf{p}-\mathbf{e}) - B\mathbf{p} = -\mathcal{H}\mathbf{e}.
\end{equation}

Now, define the ensemble agent dynamics $f_q\maps C^0(G_q,\FF)\times\R_{\geq 0}\to C^0(G_q,\FF)$ and disturbances $\omega_q\maps \R_{\geq 0}\to C^0(G_q,\FF)$ as 
\[
f_q(\mathbf{q},t)\triangleq \bigl[f_i(q_i,t)\bigr]_{i\in V_q},\quad \omega_q(t)\triangleq \bigl[\omega_i(t)\bigr]_{i\in V_q}.
\]
Similarly, define the ensemble target dynamics $f_p\maps C^0(G_p,\FF)\times\R_{\geq 0}\to C^0(G_p,\FF)$ and disturbances $\omega_p\maps \R_{\geq 0}\to C^0(G_p,\FF)$ as
\[
f_p(\mathbf{p},t)\triangleq \bigl[ f_i(p_i,t) \bigr]_{i\in V_p},\quad \omega_p(t)\triangleq \bigl[ \omega_i(t)\bigr]_{i\in V_p}.
\]

To ensure that the controller can achieve the goal, it is necessary to assume that the difference between the agent dynamics and target dynamics are bounded. This is addressed in the following assumption.

\begin{assumption}[Bounded Difference]\label{Bounded Difference}
    The mapping 
    \[\mathbf{p}\mapsto \mathcal{H}^{-1}Bf_p(\mathbf{p},t) - f_q(\mathcal{H}^{-1}B\mathbf{p},t)\]
    is of class $\mathcal{L}_\infty\bigl(C^0(G_p,\FF)\times\R_{\geq 0},C^0(G_q,\FF)\bigr)$.
\end{assumption}

\begin{figure}
    \centering
    \includegraphics[width=\linewidth]{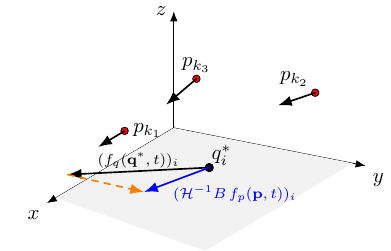}
    \caption{This figure illustrates the bounded difference between the agent and target dynamics. A surface agent $i$ is tasked with tracking three aerial targets $p_{k_j}$ for $j\in \{1,2,3\}$. The point $q_i^*$ denotes the harmonic extension of the target states to the surface domain; the blue tangent vector represents the harmonic extension of the targets’ tangent vectors. The orange vector indicates the discrepancy between the harmonically extended target dynamics and the agent dynamics evaluated at $q_i^*$. Assumption \ref{Bounded Difference} requires this discrepancy to be uniformly bounded for all target configurations.}
    \label{fig:assumption_fig}
\end{figure}

\begin{remark}
    Assumption \ref{Bounded Difference} quantifies the extent to which the following diagram
\[\begin{tikzcd}
	{C^0(G_p,\mathcal{F})} && {C^0(G_q,\mathcal{F})} \\
	{C^0(G_p,\mathcal{F})} && {C^0(G_q,\mathcal{F})}
	\arrow["{\mathcal{H}^{-1}B}", from=1-1, to=1-3]
	\arrow["{f_p}"', from=1-1, to=2-1]
	\arrow["{f_q}", from=1-3, to=2-3]
	\arrow["{\mathcal{H}^{-1}B}"', from=2-1, to=2-3]
\end{tikzcd}\]
    fails to commute. The paths in this diagram represent distinct ways of transporting target state vectors $\mathbf{p}$ to agent tangent vectors $\dot{\mathbf{q}}$ via harmonic extension. Explicitly, it asks that the harmonic extension of the target dynamics at $\mathbf{p}$ have bounded difference with the agent dynamics at the harmonic extension of $\mathbf{p}$. This is illustrated by Figure \ref{fig:assumption_fig}.
\end{remark}

Let $\mathbf{u}\triangleq [u_i]_{i\in V_q} \in \bigoplus_{i\in V_q}\R^{s_i}$. Then, using \eqref{eq: controller-1} and \eqref{eq:eta_he}, it follows that
\begin{equation}\label{ensembleController}
    \mathbf{u}=k_1 \mathbf{g}^+\mathcal{H}\mathbf{e},
\end{equation}
where $\mathbf{g}^+\maps C^0(G_q,\FF)\times\R_{\geq t_0}\to \bigoplus_{i\in V_q}\R^{s_i\times n_i}$ is defined as $\mathbf{g}^+\triangleq\text{blkdiag}\{g_1^+,\ldots,g_N^+ \}$.
Taking the time-derivative of \eqref{trackingError} and using \eqref{ensembleController} yields the closed-loop error system
\begin{equation}\label{eq:error_dynamics}
    \dot{\mathbf{e}} = -k_1\mathcal{H}\mathbf{e} + f_B + \tilde{f} + \Delta,
\end{equation}
where $f_B\triangleq \mathcal{H}^{-1}Bf_p(\mathbf{p},t) - f_q(\mathcal{H}^{-1}B\mathbf{p},t)$, $\tilde{f}\triangleq f_q(\mathbf{q}^*,t) - f_q(\mathbf{q},t)$, and $\Delta\triangleq \mathcal{H}^{-1}B\omega_p(t) - \omega_q(t)$.

\section{Stability Analysis}

The evolution of $\mathbf{e}$ is governed by the initial value problem
\begin{equation}
\dot{\mathbf{e}}=h\left(\mathbf{e},t\right),\ \mathbf{e}\left(t_{0}\right)=\mathbf{e}_{0},\label{eq:IVP}
\end{equation}
where $\mathbf{e}_{0}\in C^0(G_q,\FF)$
is the initial state and the vector field $h\maps C^0(G_q,\FF)\times\R_{\geq t_0}\to C^0(G_q,\FF)$ is defined as
\begin{equation}\label{eq:CLES}
h(\mathbf{e},t)\triangleq -k_1\mathcal{H}\mathbf{e} + f_B + \tilde{f} + \Delta.
\end{equation}

Following Assumptions \ref{functionbound} and \ref{disturbance bound}, $h$ is locally Lipschitz in $\mathbf{e}$ and continuous in $t$. Therefore, by the Picard-Lindelöf theorem (e.g., \cite[Chapter 1, Theorem 3.1]{Coddington.Earl1956}), for every $\mathbf{e}_{0}\in C^0(G_q,\FF)$ there exists a unique maximal solution. Consider the Lyapunov function candidate $V\maps C^0(G_q,\FF)\to\mathbb{R}_{\geq t_0}$
defined as
\begin{equation}
V\left(\mathbf{e}\right)\triangleq\frac{1}{2}\mathbf{e}^{\top}\mathbf{e}.\label{eq:Lyapunov}
\end{equation}

Using the fundamental theorem of calculus and Assumption \ref{functionbound},
there exists a function $\rho\maps\mathbb{R}_{\geq t_0}\to\mathbb{R}_{\geq t_0}$
such that $\| \tilde{f} \| \leq\rho (\|\mathbf{e}\|) \|\mathbf{e}\| $, where $\overline{\rho}\left(\cdot\right)\triangleq\rho\left(\cdot\right)-\rho\left(0\right) \in \mathcal{K}_\infty$. Furthermore, by Assumption
\ref{Bounded Difference}, there exists $\overline{f}\in\mathbb{R}_{\geq0}$
such that $\| f_{B} \| \leq\overline{f}$. Let $P = \max\{N,M\}$. Based on the subsequent analysis, define $\mu\triangleq\frac{\left(\overline{f}+\overline{\omega}\right)^{2}}{2k_{1}\lambda_{\min}\left\{ \mathcal{H}\right\} }$
and $k_{\min}\triangleq\frac{1}{2}k_{1}\lambda_{\min}\left\{ \mathcal{H}\right\} $, where $\overline{\omega}\triangleq \Bigl(1 + \frac{\sigma_\mathrm{max}\{B\}}{\lambda_\mathrm{min}\{\mathcal{H}\}}\Bigr)\sqrt{P}\max_{i\in V} \{\overline{\omega}_i \}$, which exists for all $i\in V$ and all $t\in\mathbb{R}_{\geq t_0}$
by Assumption \ref{disturbance bound}.

For notational convenience, let $\chi \triangleq \overline{\rho}^{-1}\left(k_{\min}-\lambda_{V}-\rho\left(0\right)\right)$, where $\lambda_{V}\in\mathbb{R}_{>0}$ is a user-defined parameter
which controls the desired rate of convergence. The region to which the
state trajectory is constrained is defined as
\begin{equation}
\mathcal{D}\triangleq\left\{ \iota\in C^0(G_q,\FF) : \left\Vert \iota\right\Vert \leq \chi \right\}.\label{eq:state space}
\end{equation}

For the dynamical
system described by (\ref{eq:IVP}), the set of initial conditions
is defined as
\begin{equation}
\begin{aligned}\mathcal{S} & \triangleq\left\{ \iota\in C^0(G_q,\FF) :\left\Vert \iota\right\Vert \leq \chi -\sqrt{\frac{\mu}{\lambda_{V}}}\right\} ,\end{aligned}
\label{eq:initial conditions}
\end{equation}
and the uniform ultimate bound is defined as
\begin{equation}
\mathcal{U}\triangleq\left\{ \iota\in C^0(G_q,\FF) :\left\Vert \iota\right\Vert \leq\sqrt{\frac{\mu}{\lambda_{V}}}\right\} .\label{eq:equilibrium set}
\end{equation}

\begin{theorem}
\label{thm:}Consider the dynamical system described by (\ref{eq:Agent Dynamics})
and (\ref{eq:Target Dynamics}). For any $\mathbf{e}_0 \in\mathcal{S}$,
the decentralized controller (\ref{eq: controller-1}) ensures that
$\mathbf{e}$ semi-globally uniformly exponentially converges to
$\mathcal{U}$ with estimate{\small{}
\[
\left\Vert \mathbf{e}\left(t\right)\right\Vert \leq\sqrt{{\rm e}^{-2\lambda_{V}\left(t-t_{0}\right)}\left\Vert \mathbf{e}_{0}\right\Vert ^{2}+\frac{\mu}{\lambda_{V}}\left(1-{\rm e}^{-2\lambda_{V}\left(t-t_{0}\right)}\right)},
\]
}for all $t\geq t_{0}$, provided that the sufficient gain condition
$k_{\min}>\rho\left(2\sqrt{\frac{\mu}{\lambda_{V}}}\right)+\lambda_{V}$
is satisfied and Assumptions \ref{functionbound}-\ref{Bounded Difference}
hold. Moreover, $\mathcal{U}\subset\mathcal{S}\subset\mathcal{D}$
and the solution $\mathbf{e}$ is complete.
\end{theorem}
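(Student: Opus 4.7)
The plan is to run a standard Lyapunov argument on the closed-loop error system \eqref{eq:error_dynamics}, using $V(\mathbf{e})=\tfrac12\mathbf{e}^\top\mathbf{e}$ from \eqref{eq:Lyapunov}, and then invoke the comparison lemma. I would first differentiate $V$ along trajectories to obtain
\[
\dot V \;=\; -k_1\mathbf{e}^\top\mathcal{H}\mathbf{e} \;+\; \mathbf{e}^\top f_B \;+\; \mathbf{e}^\top\tilde f \;+\; \mathbf{e}^\top\Delta .
\]
The quadratic term is bounded using positive definiteness of $\mathcal{H}$ (guaranteed by Assumption \ref{as:relative_cohom}) via $\mathbf{e}^\top\mathcal{H}\mathbf{e}\geq\lambda_{\min}\{\mathcal{H}\}\|\mathbf{e}\|^2$. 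For the perturbation terms I would collect the bounds already assembled above the theorem statement: $\|f_B\|\leq\bar f$ from Assumption \ref{Bounded Difference}, $\|\tilde f\|\leq\rho(\|\mathbf{e}\|)\|\mathbf{e}\|$ from the $\mathcal{C}^1$ assumption on $f_i$ via the fundamental theorem of calculus, and $\|\Delta\|\leq\bar\omega$ from Assumption \ref{disturbance bound} together with $\|\mathcal{H}^{-1}B\|\leq\sigma_{\max}\{B\}/\lambda_{\min}\{\mathcal{H}\}$ and a straightforward $\sqrt{P}$ dimension bookkeeping step.

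Next I would absorb the linear-in-$\|\mathbf{e}\|$ cross terms using Young's inequality with weight $k_1\lambda_{\min}\{\mathcal{H}\}$, which yields
\[
(\bar f+\bar\omega)\|\mathbf{e}\| \;\leq\; \mu \;+\; k_{\min}\|\mathbf{e}\|^2 ,
\]
so that
\[
\dot V \;\leq\; -\bigl(k_{\min}-\rho(\|\mathbf{e}\|)\bigr)\|\mathbf{e}\|^2 + \mu .
\]
On the region $\mathcal{D}$, the choice $\chi=\bar\rho^{-1}(k_{\min}-\lambda_V-\rho(0))$ enforces $\rho(\|\mathbf{e}\|)\leq k_{\min}-\lambda_V$, which collapses the bound to the clean form $\dot V \leq -2\lambda_V V + \mu$. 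Applying the comparison lemma then produces the stated exponential-plus-offset estimate on $\|\mathbf{e}(t)\|^2=2V(t)$.

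The main obstacle—where one has to be careful rather than clever—is showing that the trajectory never leaves $\mathcal{D}$, so that the $\rho(\|\mathbf{e}\|)\leq\rho(\chi)$ bound used to derive the differential inequality remains valid for all $t\geq t_0$. For this I would use subadditivity of the square root on the closed-form estimate to get $\|\mathbf{e}(t)\|\leq\|\mathbf{e}_0\|+\sqrt{\mu/\lambda_V}$, so that $\mathbf{e}_0\in\mathcal{S}$ forces $\|\mathbf{e}(t)\|\leq\chi$ and the argument is self-consistent; the sufficient gain condition $k_{\min}>\rho(2\sqrt{\mu/\lambda_V})+\lambda_V$ is exactly what guarantees $2\sqrt{\mu/\lambda_V}\leq\chi$, i.e., the nested inclusion $\mathcal{U}\subset\mathcal{S}\subset\mathcal{D}$.

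Finally, completeness of the solution follows because $h(\mathbf{e},t)$ in \eqref{eq:CLES} is locally Lipschitz in $\mathbf{e}$ and continuous in $t$ (Assumptions \ref{functionbound}, \ref{control effectiveness}, \ref{gInvBound}, \ref{disturbance bound}), so the Picard–Lindelöf maximal solution exists, and the uniform bound $\|\mathbf{e}(t)\|\leq\chi$ just established rules out finite-time blow-up, extending the solution to all $t\geq t_0$. The semi-global nature of the result is then read off directly: enlarging the user parameters $k_1$ (hence $k_{\min}$) enlarges $\chi$ and thus the admissible set $\mathcal{S}$ of initial conditions, while the ultimate bound $\mathcal{U}$ shrinks with $k_1$ through $\mu$.
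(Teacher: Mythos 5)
Your proposal is correct and follows essentially the same route as the paper: the same Lyapunov function and perturbation bounds on $f_B$, $\tilde f$, $\Delta$, with your Young's-inequality step being identical to the paper's completing-the-square (producing the same $\mu$ and $k_{\min}$), the same Gr\"onwall/comparison estimate, the same use of the gain condition to obtain $\mathcal{U}\subset\mathcal{S}\subset\mathcal{D}$, and the same Picard--Lindel\"of-plus-boundedness argument for completeness. The only point to tighten is your ``self-consistent'' invariance claim: the paper closes the potential circularity with an explicit continuity/contradiction argument (the set $\mathcal{I}$ of times the trajectory has remained in $\mathcal{D}$), where the estimate yields the \emph{strict} bound $\left\Vert \mathbf{e}(t)\right\Vert <\chi$ so the trajectory can never reach $\partial\mathcal{D}$ before the maximal time.
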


\begin{proof}
Taking the derivative of (\ref{eq:Lyapunov}) along the trajectories
of (\ref{eq:IVP}) and using (\ref{eq:CLES}) yields
\begin{equation}
\begin{aligned}
\frac{{\rm d}}{{\rm d}t}V\left(\mathbf{e}\left(t\right)\right) 
 & =-k_{1}\mathbf{e}^{\top}\left(t\right)\mathcal{H}\mathbf{e}\left(t\right)+\mathbf{e}^{\top}\left(t\right)f_{B} \\
 & +\mathbf{e}^{\top}\left(t\right)\tilde{f}+\mathbf{e}^{\top}\left(t\right)\Delta.
\end{aligned}
\label{eq:vDot}
\end{equation}
Using the definitions of $\Delta$ and $\overline{\omega}$, the
triangle inequality yields
\begin{equation}\label{eq:DeltaBounds}
    \|\Delta\| \leq \Bigl(1 + \frac{\sigma_\mathrm{max}\{B\}}{\lambda_\mathrm{min}\{\mathcal{H}\}}\Bigr)\sqrt{P}\max_{i\in V} \{\overline{\omega}_i \} = \overline{\omega}.
\end{equation}

Applying the triangle and Cauchy-Schwarz inequalities and using (\ref{eq:DeltaBounds})
and the definitions of $\rho$ and $\overline{f}$ yields that (\ref{eq:vDot})
is bounded above as
\begin{equation}
\begin{aligned}
\frac{{\rm d}}{{\rm d}t}V\left(\mathbf{e}\left(t\right)\right) & \leq-k_{1}\lambda_{\min}\left\{ \mathcal{H}\right\} \left\Vert \mathbf{e}\left(t\right)\right\Vert ^{2} \\
 & +\rho\left(\left\Vert \mathbf{e}\left(t\right)\right\Vert \right)\left\Vert \mathbf{e}\left(t\right)\right\Vert ^{2} \\
 & +\left(\overline{f}+\overline{\omega}\right)\left\Vert \mathbf{e}\left(t\right)\right\Vert.
\end{aligned}
\label{eq:vDot-1}
\end{equation}

Completing the square yields that (\ref{eq:vDot-1}) is bounded above as
\begin{equation}\small{}
\frac{{\rm d}}{{\rm d}t}V\left(\mathbf{e}\left(t\right)\right) \leq -\left(k_{\min}-\rho\left(\left\Vert \mathbf{e}\left(t\right)\right\Vert \right)\right)\left\Vert \mathbf{e}\left(t\right)\right\Vert ^{2}+\mu.
\label{eq:vDot-2}
\end{equation}

It is now shown, by contradiction, that the maximal solution exists
for all time and remains in the interior of $\mathcal{D}$. Fix any $\mathbf{e}_{0}\in\mathcal{S}$
and let $\mathbf{e}:\left[t_{0},T_{\max}\right)\to C^0(G_q,\FF)$ be the
corresponding unique maximal Carathéodory solution of (\ref{eq:IVP}).

Let $\mathcal{I}\triangleq\left\{ t\in\left[t_{0},T_{\max}\right):\mathbf{e}\left(\tau\right)\in\mathcal{D}\text{ for all }\tau\in\left[t_{0},t\right]\right\} $, where $T_{\max}\in(t_{0},\infty ]$ denotes the supremum of all times $T>t_{0}$ for which a solution exists on $[t_{0},T)$. Since $\mathbf{e}_{0} \in\mathcal{S}\subset\mathcal{D}$
and $t\mapsto \mathbf{e}\left(t\right)$ is continuous, the set $\mathcal{I}$
is non-empty and contains an interval $\left[t_{0},t_{0}+\epsilon\right)$
for some $\epsilon>0$.

For all $t\in\mathcal{I}$, continuity implies there exists a time interval such that $\mathbf{e}\left(t\right)\in\mathcal{D}$. Using $\overline{\rho}\left(\cdot\right)=\rho\left(\cdot\right)-\rho\left(0\right)$
indicates that $\rho\left(\cdot\right)=\overline{\rho}\left(\cdot\right)+\rho\left(0\right)$. Then $\left\Vert \mathbf{e}\left(t\right)\right\Vert \leq\overline{\rho}^{-1}\left(k_{\min}-\lambda_{V}-\rho\left(0\right)\right)$
which implies $\rho\left(\left\Vert \mathbf{e}\left(t\right)\right\Vert \right)\leq k_{\min}-\lambda_{V}-\rho\left(0\right)$, 
i.e., $\rho\left(\left\Vert \mathbf{e}\left(t\right)\right\Vert \right)+\rho\left(0\right)-k_{\min}\leq-\lambda_{V}$.
Using this inequality and (\ref{eq:Lyapunov}) yields that  (\ref{eq:vDot-2}) is bounded above as
\begin{equation}
\frac{{\rm d}}{{\rm d}t}V\left(\mathbf{e}\left(t\right)\right) \leq-2\lambda_{V}V\left(\mathbf{e}\left(t\right)\right)+\mu,
\label{eq:vDot-5}
\end{equation}
for all $t\in\mathcal{I}$. Applying the Grönwall-Bellman inequality
to (\ref{eq:vDot-5}) yields
\begin{equation}
\begin{aligned}V\left(\mathbf{e}\left(t\right)\right) & \leq{\rm e}^{-2\lambda_{V}\left(t-t_{0}\right)}V\left(\mathbf{e}\left(t_{0}\right)\right) \\
&+\frac{\mu}{2\lambda_{V}}\left(1-{\rm e}^{-2\lambda_{V}\left(t-t_{0}\right)}\right),\end{aligned}
\label{eq:vsol}
\end{equation}
for all $t\in\mathcal{I}$. Applying (\ref{eq:Lyapunov}) to (\ref{eq:vsol}) yields
\begin{equation}\label{eq: final bound}
    \begin{aligned}
        \left\Vert \mathbf{e}\left(t\right)\right\Vert &\leq\Bigl({\rm e}^{-2\lambda_{V}\left(t-t_{0}\right)}\left\Vert \mathbf{e}_{0}\right\Vert ^{2}\\
        &+\text{ }\frac{\mu}{\lambda_{V}}\left(1-{\rm e}^{-2\lambda_{V}\left(t-t_{0}\right)}\right)\Bigr)^{\frac{1}{2}},
    \end{aligned}
\end{equation}
for all $t\in\mathcal{I}$.
Since $e\left(t_{0}\right)\in\mathcal{S}$, $\left\Vert \mathbf{e}\left(t_{0}\right)\right\Vert \leq\overline{\rho}^{-1}\left(k_{\min}-\lambda_{V}-\rho\left(0\right)\right)-\sqrt{\frac{\mu}{\lambda_{V}}}$.
Substituting this expression into (\ref{eq: final bound}) and using the
facts that ${\rm e}^{-2\lambda_{V}\left(t-t_{0}\right)}\leq1$ and
$1-{\rm e}^{-2\lambda_{V}\left(t-t_{0}\right)}<1$ for all $t\geq t_{0}$
yield the strict bound $\| \mathbf{e}\left(t\right) \| <\overline{\rho}^{-1}\left(k_{\min}-\lambda_{V}-\rho\left(0\right)\right)$, for all $t\in\mathcal{I}$. Thus $\mathbf{e}\left(t\right)\in\text{int}\left(\mathcal{D}\right)$
for all $t\in\mathcal{I}$. Now, assume for contradiction that the
maximal time of existence is finite, i.e., $\sup\mathcal{I}<T_{\max}$.
If $\sup\mathcal{I}<T_{\max}$, continuity of $\mathbf{e}\left(\cdot\right)$
would imply $\mathbf{e}\left(\sup\mathcal{I}\right)\in\partial\mathcal{D}$,
contradicting $\mathbf{e}\left(t\right)\in\text{int}\left(\mathcal{D}\right)$
for all $t\in\mathcal{I}$. Hence $\sup\left(\mathcal{I}\right)=T_{\text{max}}$
and $\mathcal{I}=\left[t_{0},T_{\max}\right)$, i.e., $\mathbf{e}\left(t\right)\in\mathcal{D}$
for all $t\in\left[t_{0},T_{\max}\right)$.

Since $\mathbf{e}\left(t\right)\in\mathcal{D}$ for all $t\in\left[t_{0},T_{\max}\right)$
and $\mathcal{D}$ is compact, $\sup_{t\in\left[t_{0},T_{\max}\right)}\left\Vert \mathbf{e}\left(t\right)\right\Vert <\infty$.
By \cite[Chapter 1, Theorem 4.1]{Coddington.Earl1956}, the
solution extends from $\left[t_{0},T_{\max}\right)$ to $\left[t_{0},\infty\right)$.
Therefore the solution exists for all $t\geq t_{0}$ with $\mathbf{e}\left(t\right)\in\mathcal{D}$
for all $t\geq t_{0}$. Consequently, for all $\mathbf{e}_{0} \in\mathcal{S}$,
(\ref{eq: final bound}) holds for every $t\geq t_{0}$.

As $t\to\infty$, the bound in (\ref{eq: final bound}) converges 
to $\left\Vert \mathbf{e}\left(t\right)\right\Vert \leq\sqrt{\frac{\mu}{\lambda_{V}}}$,
i.e., the solution is uniformly ultimately bounded with ultimate set
$\mathcal{U}$ by (\ref{eq:equilibrium set}). Furthermore, since
$\mu=\frac{\left(\overline{f}+\overline{\omega}\right)^{2}}{2k_{1}\lambda_{\min}\left\{ \mathcal{H}\right\} }$, $\mu$ can be arbitrarily decreased
by increasing $k_{1}$.

Next, recall that $k_{\min}>\rho\left(2\sqrt{\frac{\mu}{\lambda_{V}}}\right)+\lambda_{V}$.
Using the definition of $\overline{\rho}$ and the fact that $\overline{\rho}$
is invertible yields $\sqrt{\frac{\mu}{\lambda_{V}}}<\overline{\rho}^{-1}\left(k_{\min}-\lambda_{V}-\rho\left(0\right)\right)-\sqrt{\frac{\mu}{\lambda_{V}}}$,
i.e., $\mathcal{U}\subset\mathcal{S}$. Furthermore, since $\sqrt{\frac{\mu}{\lambda_{V}}}>0$,
strict monotonicity of $\rho$ gives $k_{\min}>\rho\left(2\sqrt{\frac{\mu}{\lambda_{V}}}\right)+\lambda_{V}>\rho\left(\sqrt{\frac{\mu}{\lambda_{V}}}\right)+\lambda_{V}$
which implies $k_{\min}-\lambda_{V}-\rho\left(0\right)>\overline{\rho}\left(\sqrt{\frac{\mu}{\lambda_{V}}}\right)>0$.
Hence, $\overline{\rho}^{-1}\left(k_{\min}-\lambda_{V}-\rho\left(0\right)\right)-\sqrt{\frac{\mu}{\lambda_{V}}}>0$,
so $\mathcal{S}$ has strictly positive radius i.e., $\mathcal{S}\neq\varnothing$.
Thus, $\mathcal{U}\subset\mathcal{S}\subset\mathcal{D}$. Moreover,
since $\overline{\rho}\in\mathcal{K}_{\infty}$, $\overline{\rho}^{-1}\in\mathcal{K}_{\infty}$.
Thus, since $k_{\min}=\frac{1}{2}k_{1}\lambda_{\min}\left\{ \mathcal{H}\right\} $,
letting $k_{1}\to\infty$ yields $\overline{\rho}^{-1}\left(k_{\min}-\lambda_{V}-\rho\left(0\right)\right)\to\infty$,
hence $\mathcal{D}$ can be made arbitrarily large and the result
is semi-global.

Since $\lambda_{V}$ is independent of the initial time $t_{0}$ or
initial condition $\mathbf{e}_{0}$, the exponential convergence
is uniform. Additionally, since $\mathbf{e}\in\mathcal{L}_\infty \bigl(\R_{\geq t_0};C^0(G_q,\FF)\bigr)$, using \eqref{eq:eta_he} yields $\eta\in\mathcal{L}_\infty \bigl(\R_{\geq t_0};C^0(G_q,\FF)\bigr)$. Thus, by (\ref{eq: controller-1})
and Assumption \ref{gInvBound}, $ \mathbf{u}\in \mathcal{L}_\infty \bigl(\R_{\geq t_0};C^0(G_q,\FF)\bigr) $ and all implemented signals are bounded.
\end{proof}

\section{Simulation}

A 13-agent planar network tracks four targets in \(\mathbb{R}^3\) that self-organize into a regular tetrahedron via decentralized consensus. Four agents form a square about the apex target; the remaining nine form three equilateral triangles, each centered on a base target. The background velocity field superposes point sources/sinks, axis-aligned vortices, and Gaussian axial flows. Point singularities are \(SS=\big[\begin{smallmatrix}-2.0&1.0&0.8&1.2\\[1pt]1.6&-1.2&-0.5&-1.0\\[1pt]0.2&2.0&1.6&0.6\end{smallmatrix}\big]\), where each row encodes \((p_x,p_y,p_z,\alpha)\) with \(\alpha>0\) a source and \(\alpha<0\) a sink. Vortical components are \(VV=\big[\begin{smallmatrix}-1.2&-0.6&0.0&0.0&0.0&1.0&0.9\\[1pt]2.0&1.3&0.7&0.0&1.0&0.0&-0.7\end{smallmatrix}\big]\), where each row encodes \((q_x,q_y,q_z,a_x,a_y,a_z,\gamma)\); the sign of \(\gamma\) sets rotation about axis \(a\) via the right-hand rule. Gaussian axial components use centers \(c_k\), unit directions \(\hat d_k\), strengths \(\beta_k\), lateral widths \(w_k\), and axial lengths \(l_k\); at evaluation point \(X=(x,y,z)\), define \(r_k=X-c_k\), \(t_k=r_k^\top\hat d_k\), \(n_k=r_k-t_k\hat d_k\), and \(v_k=\beta_k\exp(-\|n_k\|^2/2w_k^2)\exp(-t_k^2/2l_k^2)\hat d_k\). Two such components are used: \(c=(-3.0,0.0,0.0)\), \(d=[1.0,0.0,0.1]\), \(\beta=0.9\), \(w=0.9\), \(l=5\); and \(c=(0.0,-2.0,1.0)\), \(d=[0.7,0.7,0.0]\), \(\beta=0.6\), \(w=0.55\), \(l=3.2\). Agents evolve in the \(xy\)-plane under the same field.

Target motion is a Lissajous-type reference \(\dot x=Aa\cos(at+\delta_x)\), \(\dot y=Bb\cos(bt)\), \(\dot z=Cc\cos(ct+\delta_z)\) with \(A=20\,\mathrm{m}\), \(B=10\,\mathrm{m}\), \(C=15\,\mathrm{m}\), \((a,b,c)=(1,2,1)\,\mathrm{rad\,s^{-1}}\), \(\delta_x=\pi/2\), \(\delta_z=0\). Agents start from arbitrary planar positions; targets initialize on the negative \(z\)-axis. Formations are enforced by constant offsets relative to assigned targets and inter-agent distance constraints. Squares place the target at the center; equilateral triangles place the target at the centroid with \(R=l/\sqrt{3}\) and \(60^\circ\) inter-edge angle. The tetrahedron uses base vertices \((0,0,0)\), \((d,0,0)\), \((\tfrac{d}{2},\tfrac{\sqrt{3}d}{2},0)\) and apex \((\tfrac{d}{2},\tfrac{\sqrt{3}d}{6},\tfrac{\sqrt{6}}{3}d)\). Edge-wise equal-and-opposite offset updates preserve the centroid, implying \(\sum_{i=1}^{4}\mathrm{offset}_i=0\).

With proportional gains \(k_{p,A}=10\) (agents) and \(k_{p,T}=5\) (targets), agent groups converge to the commanded square/triangular formations about their assigned targets, and the targets converge to a regular tetrahedron. Sparse cross-links between formations yield small steady offsets from exact centering. Planar tracking produces figure-8 agent trajectories in \(xy\), consistent with the reference. Fig. \ref{fig: subplots} is a comparison of the initial vs final configuration of the simulation. As seen, all agents successfully achieve their corresponding formation about their desired target. Fig. \ref{fig: 3D figure} shows all trajectories over the course of the entire simulation.

\begin{figure}[h]
    \centering
    \includegraphics[width=1\linewidth]{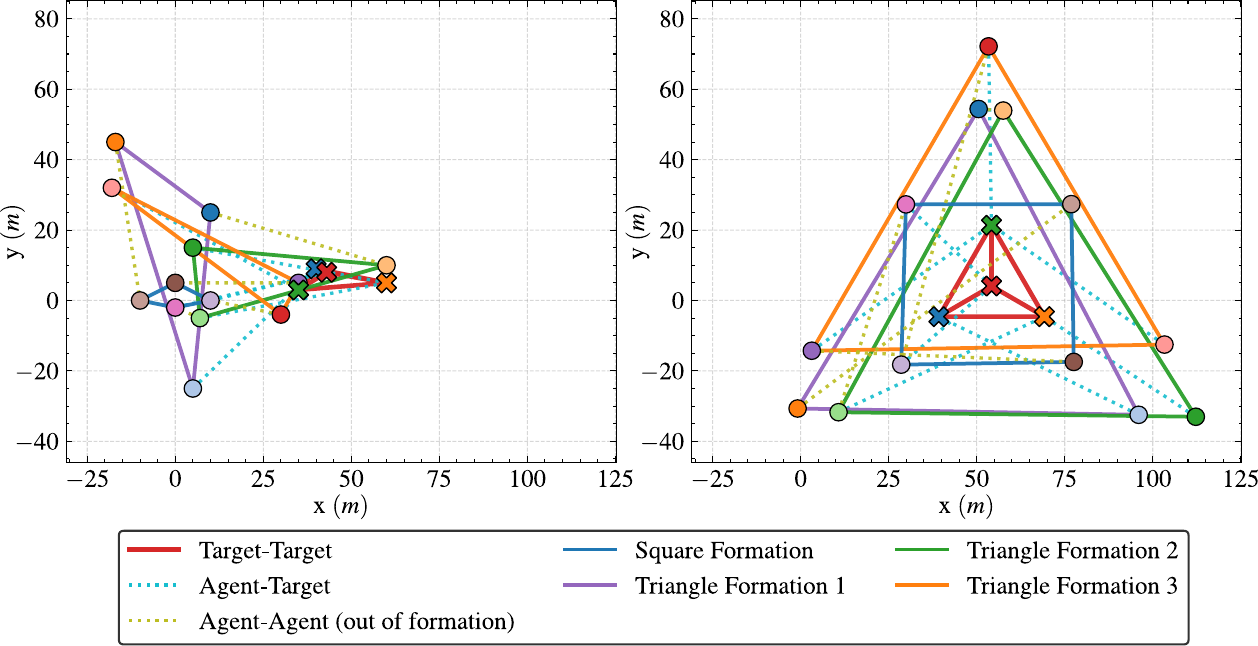}
    \caption{Comparison of the initial and final conditions of the simulation, with agent proportional gains set to $10$.}
    \label{fig: subplots}
\end{figure}

\begin{figure}[h]
    \centering
    \includegraphics[width=1\linewidth]{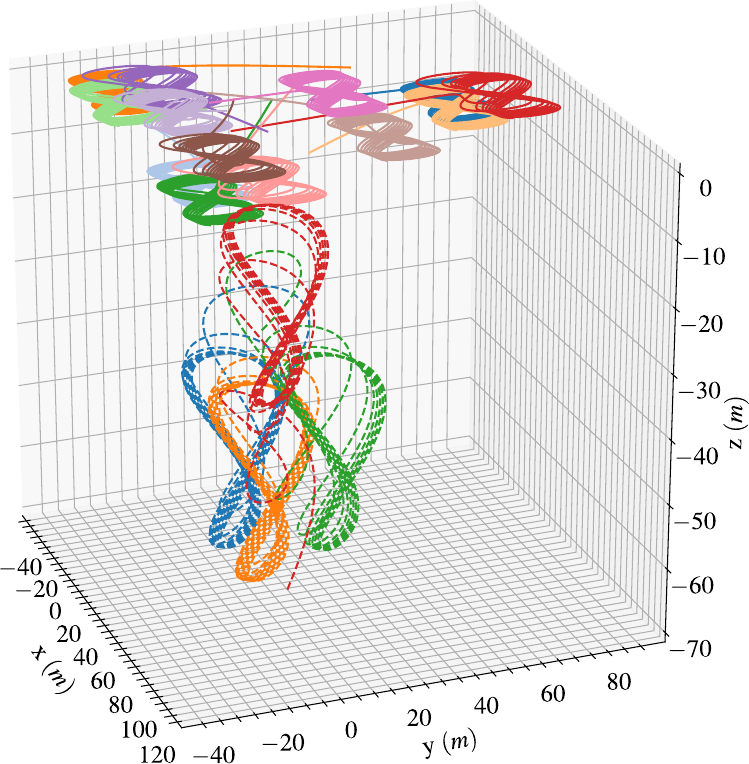}
    \caption{Visualization of all trajectories over the course of the entire simulation.}
    \label{fig: 3D figure}
\end{figure}

\section{Conclusion}
This work presented a novel framework for multi-agent, multi-target tracking capable of handling heterogeneous nonlinear agent dynamics and state spaces by leveraging the mathematical structure of cellular sheaves. Extending existing cooperative coordination frameworks to the non-cooperative tracking problem, this work formulated the task as a harmonic extension problem on a cellular sheaf. This formulation led to the development of a decentralized control law using the sheaf Laplacian, with a corresponding Lyapunov-based stability analysis guaranteeing tracking error convergence. Future research will focus on expanding this sheaf-theoretic approach to include nonlinear restriction maps, allowing the model to capture more complex agent interactions, and adapting the framework to systems defined over directed communication topologies.

\bibliographystyle{ieeetr}
\bibliography{master.bib}

\end{document}